\newtheorem{theorem}{Theorem}[section]
\newtheorem{problem}[theorem]{Problem}
\newtheorem{definition}[theorem]{Definition}
\newtheorem{proposition}[theorem]{Proposition}
\newtheorem{remark}[theorem]{Remark}
\newcommand{\F}{{\mathbb F}}
\newcommand{\fq}{{\mathbb F}_{q}}
\title{Left ideal LRPC codes and a ROLLO-type cryptosystem based on group algebras}
\author{Martino Borello, Paolo Santonastaso and Ferdinando Zullo}
\date{ }
\begin{document}
\maketitle

\begin{abstract}
In this paper we introduce left ideal low-rank parity-check codes by using group algebras and we finally use them to extend ROLLO-I KEM.
\end{abstract}

\textbf{Keywords:}{ Group algebra, low-rank parity-check code, ideal matrix, ROLLO}

\section{Introduction}

Low-rank parity-check (LRPC) codes are rank-metric codes introduced by Gaborit et al. in \cite{aragon2019low,gaborit2013low} as the rank-metric analogs of low-density parity-check codes in the Hamming metric. These codes turned out to be very interesting for their efficient probabilistic decoding algorithm and their applications in several contexts, such as powerline communications \cite{yazbek2017low}, network coding \cite{el2018efficient} and cryptography \cite{melchor2020rollo}.
Regarding the last one, in ROLLO cryptosystem special types of LRPC codes are used, known as ideal LRPC codes, which can be efficiently stored: they can be simply represented by a vector instead of the entire generator matrix (as in the case with circulant matrices).

Motivated by the application in cryptography, in this paper we introduce left ideal LRPC codes arising from group algebras, deriving some interesting properties, such as a systematic parity-check matrix. Thank to this latter property, we are able to describe a new variant of the Key-Encapsulation scheme ROLLO-I with the use of group algebras. A similar approach has been recently used in \cite{chimal2022efficient} in the Hamming-metric context (MDPC codes), to generalize the BIKE cryptosystems.

\medskip

Throughout the paper, $q$ is a prime power, $m$  is a positive integer, $\F_{q^m}$ is the field with $q^m$ elements and $G$ is a finite group of size $n$.

\section{Group algebras and left ideal matrices}

The group algebra $\F_{q^m} G$ is the set 
$\left\{\sum_{g\in G} a_g g \colon a_g \in \F_{q^m} \right\}$,
which is an $\F_{q^m}$-vector space of dimension $n$ in a natural way and it is also an $\F_{q^m}$-algebra via the multiplication 
\[
ab:=\sum_{g \in G}\left(\sum_{h \in G} a_h b_{h^{-1}g} \right)g,
\]
for $a=\sum_{g \in G}a_g g$ and $b=\sum_{g \in G}b_g g$. From now, \emph{we fix an ordering} $1_G=g_1,\ldots,g_n$ of the elements of $G$. The $\F_{q^m}$-vector spaces  $\F_{q^m}^n$ and $\F_{q^m}G$ may be identified by the following $\F_{q^m}$-isomorphism:

\[
\begin{array}{lccl}
    \Psi \colon & \F_{q^m}^n & \longrightarrow & \F_{q^m}G \\
      & (u_1,\ldots,u_{n}) & \longmapsto & \sum_{i=1}^{n}u_ig_i,
\end{array}
\]
Via the isomorphism $\Psi$, we may endow the $\F_{q^m}$-vector space $\F_{q^m}^n$ with an algebra structure, by defining the product of two elements $u,v \in \F_{q^m}^n$ in the following way:
\[
uv:=\Psi^{-1}(\Psi(u) \Psi(v)).
\]

\begin{definition}
Let $a \in \F_{q^m}G$. The \textbf{left ideal matrix} generated by $a$ is the matrix
\[
\mathcal{LIM}(a):=\begin{pmatrix}
\Psi^{-1}(g_1a) \\
\Psi^{-1}(g_2a) \\
\vdots \\
\Psi^{-1}(g_na)
\end{pmatrix} \in \F_{q^m}^{n \times n}
\]
\end{definition}

\begin{remark}
The image via $\Psi$ of the vector space generated by the lines of $\mathcal{LIM}(a)$ is the left ideal generated by $a$ in $\F_{q^m}G$.
\end{remark}

Note that the product of two vectors of $\F_{q^m}$ can be expressed as the usual product vector-matrix: for $u=(u_1,\ldots,u_n), v=(v_1,\ldots,v_n) \in \F_{q^m}^n$,
\begin{align*}
    uv & = \Psi^{-1}(\Psi(u) \Psi(v))
=\Psi^{-1}\left(\displaystyle\sum_{i=1}^{n} u_ig_i\Psi(v)\right) 
 =\displaystyle\sum_{i=1}^{n} u_i \Psi^{-1} \left(g_i\Psi(v)\right) \\
&=(u_1,\ldots,u_n)\left(\begin{matrix}
\Psi^{-1}(g_1\Psi(v)) \\
\Psi^{-1}(g_2\Psi(v)) \\
\vdots \\
\Psi^{-1}(g_n\Psi(v))
\end{matrix}
\right)= u \mathcal{LIM}(\Psi(v)).
\end{align*}
Moreover, $\mathcal{LIM}(1)$ is clearly the identity.
We are now interested in invertible left ideal matrices. In this regard, since the group algebra is finite, if an element is left invertible, it is also right invertible and conversely. Moreover, the left inverse and the right inverses coincide.

\begin{proposition}[Proposition 9 of \cite{chimal2022efficient}]\label{prop:invert}
An element $a \in \F_{q^m}G$ is invertible if and only if the left ideal matrix $\mathcal{LIM}(a)$ is invertible.  
\end{proposition}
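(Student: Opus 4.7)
The plan is to interpret $\mathcal{LIM}(a)$ as the matrix representing the right-multiplication endomorphism. The computation immediately preceding the proposition, $uv = u\,\mathcal{LIM}(\Psi(v))$, says precisely that under the identification $\Psi \colon \F_{q^m}^n \to \F_{q^m}G$, the $\F_{q^m}$-linear map $R_a \colon x \mapsto xa$ on $\F_{q^m}G$ is represented in the ordered basis $g_1,\ldots,g_n$ by the matrix $\mathcal{LIM}(a)$. The proposition then reduces to the statement that $a$ is a unit in $\F_{q^m}G$ if and only if $R_a$ is a linear bijection of a finite-dimensional $\F_{q^m}$-vector space.

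For the forward direction I would first upgrade the displayed identity above to multiplicativity of $\mathcal{LIM}$: the $i$-th row of $\mathcal{LIM}(ab)$ is $\Psi^{-1}(g_i(ab)) = \Psi^{-1}((g_ia)b)$, and applying the same identity with $u = \Psi^{-1}(g_ia)$ and $v = \Psi^{-1}(b)$ one sees that this equals $\Psi^{-1}(g_ia)\,\mathcal{LIM}(b)$, which is exactly the $i$-th row of $\mathcal{LIM}(a)\mathcal{LIM}(b)$. Hence $\mathcal{LIM}(ab) = \mathcal{LIM}(a)\mathcal{LIM}(b)$. Combined with $\mathcal{LIM}(1_G) = I_n$, any inverse $b$ of $a$ yields $\mathcal{LIM}(a)\mathcal{LIM}(b) = I_n$, so $\mathcal{LIM}(a)$ is invertible.

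For the converse, if $\mathcal{LIM}(a)$ is invertible then $u \mapsto u\,\mathcal{LIM}(a)$ is a bijection of $\F_{q^m}^n$, so $R_a$ is surjective on $\F_{q^m}G$. In particular there exists $b \in \F_{q^m}G$ with $ba = 1_G$, so $a$ has a left inverse; by the observation recalled just before the proposition (left and right invertibility coincide in the finite algebra $\F_{q^m}G$), $a$ is a two-sided unit.

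The main obstacle is essentially bookkeeping with conventions rather than content. One has to be careful that the excerpt uses row vectors acting on the right of $\mathcal{LIM}(\cdot)$, so that $\mathcal{LIM}(a)$ represents $R_a$ and not the left-multiplication map $L_a$; this is exactly what makes $\mathcal{LIM}$ a ring \emph{homomorphism} in the order $\mathcal{LIM}(ab)=\mathcal{LIM}(a)\mathcal{LIM}(b)$ (rather than an anti-homomorphism), without which the forward direction above would produce the wrong factorization. Once that is fixed, every other step is a mechanical consequence of finite-dimensionality.
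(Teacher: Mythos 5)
Your proof is correct and complete: identifying $\mathcal{LIM}(a)$ with the matrix of the right-multiplication map $R_a$ via the identity $uv=u\,\mathcal{LIM}(\Psi(v))$, deducing $\mathcal{LIM}(ab)=\mathcal{LIM}(a)\mathcal{LIM}(b)$ and $\mathcal{LIM}(1_G)=I_n$, and then using finite-dimensionality together with the paper's remark that one-sided inverses in $\F_{q^m}G$ are two-sided, is exactly the standard argument. The paper itself supplies no proof (it imports the statement as Proposition 9 of the reference it cites), so there is nothing to diverge from; your regular-representation route is the expected one.
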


The investigations of units in finite group algebra is a subject of intense research (see for example 
\cite{sandling1984units,sahai2019unit,makhijani2014unit} and references therein).

\section{Low-rank parity-check codes from group algebras}

The LRPC codes have been introduced in \cite{gaborit2013lowrank}. This
family of codes can be seen as the equivalent of classical LDPC codes for the rank
metric. There is a natural analogy between low
density matrices and matrices with low rank.

\begin{definition}
Let $\lambda,K,N$ be positive integers with $0<K<N$. A \textbf{low-rank parity-check code} (LRPC) with parameter $(\lambda,K,N)$ is a code with a parity-check matrix $H$ such that $H=(h_{ij}) \in \F_{q^m}^{(N-K) \times N}$ is a full-rank matrix such that its coefficients generate an $\F_q$-subspace $\mathcal{F}$ of dimension $\lambda$, i.e. $\dim_{\F_q}(\langle h_{ij} \colon i\in \{1,\ldots,N-K\},j\in\{1,\ldots,N\} \rangle_{\F_q})=\lambda$.
\end{definition}

For ``small" values of $\lambda$, LRPC codes admit efficient probabilistic decoding algorithm (RSR) to recover the support of the error vector. More precisely, given as input an $\F_q$-basis of $\mathcal{F}$, $s=(s_1,\ldots,s_{\lambda})=eH^{\top} \in \F_{q^m}^N$ a syndrome of an error $e \in \F_{q^m}^N$ whose components belong to an $\F_q$-subspace $\mathcal{E}$ of $\F_{q^m}$, then the RSR algorithm outputs $\mathcal{E}$. For more details see \cite{gaborit2013lowrank}.

We may describe a low-rank parity-check code via its parity-check matrix. In order to reduce the size of a representation of this code, we can introduce left ideal low-rank parity-check codes via left ideal matrices.

\begin{definition} \label{def:idealparitycheck}
Let $\mathcal{F}$ be an $\F_q$-subspace of dimension $\lambda$ of $\F_{q^m}$ and let $h_1,h_2 \in \F_{q^m}G$. Suppose that $\langle h_{i,j} \colon 1 \leq i \leq n, 1 \leq j \leq 2 \rangle_{\F_q}=\mathcal{F}$ where $\Psi^{-1}(h_1)=(h_{1,1},\ldots,h_{n,1}),\Psi^{-1}(h_2)=(h_{1,2},\ldots,h_{n,2})$ and either $h_1$ or $h_2$ is an invertible element of $\F_{q^m}G$. A LRPC code is called \textbf{left ideal} if it has a parity-check matrix
\[
H_{h_1,h_2}:=\begin{pmatrix}
\mathcal{LIM}(h_1)^\top \mathcal{LIM}(h_2)^\top
\end{pmatrix}\in \F_{q^m}^{n \times 2n}.
\]
\end{definition}

Since either $h_1$ or $h_2$ is invertible, by Proposition \ref{prop:invert} either $\mathcal{LIM}(h_1)$ or $\mathcal{LIM}(h_2)$ is invertible, so that $H_{h_1,h_2}$ is a full-rank matrix. 

\begin{theorem}
If $\mathcal{C}$ is a left ideal LRPC code with parity-matrix $H_{h_1,h_2}$, with $h_1,h_2 \in \mathbb{F}_{q^m}G$ and either $h_1$ or $h_2$ is invertible, then the parameters of $\mathcal{C}$ are $(\lambda,n,2n)$, where 
\[ \lambda=\dim_{\fq}(\langle h_{i,j} \colon i \in\{1,\ldots,n\}, j\in\{1,2\} \rangle_{\fq}), \]
where $\Psi^{-1}(h_1)=(h_{1,1},\ldots,h_{n,1})$ and $\Psi^{-1}(h_2)=(h_{1,2},\ldots,h_{n,2})$.
If $h_1$ is invertible, then the systematic parity-check matrix of $\mathcal{C}$ is $H_{1,h_2h_1^{-1}}=\begin{pmatrix}
I_n \ \mathcal{LIM}(h_2{h_1}^{-1})^\top
\end{pmatrix}$.
\end{theorem}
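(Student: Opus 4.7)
The plan is to split the statement into two independent claims: (a) the code has parameters $(\lambda,n,2n)$, and (b) when $h_1$ is invertible, a systematic parity-check matrix is $(I_n \mid \mathcal{LIM}(h_2 h_1^{-1})^\top)$. For (a), the length $N=2n$ is immediate from the shape of $H_{h_1,h_2}$, and the paragraph just before the statement already gives that $H_{h_1,h_2}$ has full rank $n$, hence $K=2n-n=n$. To see that the $\F_q$-span of all entries of $H_{h_1,h_2}$ coincides with $\langle h_{i,j}\rangle_{\F_q}=\mathcal{F}$ of dimension $\lambda$, I would expand the $i$-th row of $\mathcal{LIM}(h_k)$: since this row is $\Psi^{-1}(g_i h_k) = \Psi^{-1}\bigl(\sum_{j} h_{j,k}\, g_i g_j\bigr)$ and $j\mapsto g_i g_j$ is a permutation of $G$, the entries of that row are just a reordering of $h_{1,k},\ldots,h_{n,k}$. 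Taking the union over $i$ and $k$ gives the desired equality.

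For (b), the key algebraic fact I would prove first is that $\mathcal{LIM}\colon \F_{q^m}G\to M_n(\F_{q^m})$ is an algebra homomorphism, i.e.\ $\mathcal{LIM}(bc)=\mathcal{LIM}(b)\mathcal{LIM}(c)$. The calculation in the preamble reads $uv = u\,\mathcal{LIM}(\Psi(v))$; applied to $u=\Psi^{-1}(a)$, $v=\Psi^{-1}(b)$ this is the same as
\[
\Psi^{-1}(ab) \;=\; \Psi^{-1}(a)\,\mathcal{LIM}(b) \qquad \text{for all } a,b\in\F_{q^m}G.
\]
Iterating and using associativity of the group algebra, for every $a,b,c\in\F_{q^m}G$,
\[
\Psi^{-1}(a)\,\mathcal{LIM}(b)\mathcal{LIM}(c) \;=\; \Psi^{-1}(ab)\,\mathcal{LIM}(c) \;=\; \Psi^{-1}(abc) \;=\; \Psi^{-1}(a)\,\mathcal{LIM}(bc).
\]
Since $\Psi^{-1}$ is a bijection onto $\F_{q^m}^n$, letting $a$ range over a basis (e.g.\ $g_1,\ldots,g_n$) yields $\mathcal{LIM}(b)\mathcal{LIM}(c)=\mathcal{LIM}(bc)$.

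The rest is then bookkeeping. I apply multiplicativity to the identity $h_2=(h_2 h_1^{-1})h_1$, obtaining $\mathcal{LIM}(h_2)=\mathcal{LIM}(h_2 h_1^{-1})\,\mathcal{LIM}(h_1)$, transpose to get $\mathcal{LIM}(h_2)^\top=\mathcal{LIM}(h_1)^\top\,\mathcal{LIM}(h_2 h_1^{-1})^\top$, and left-multiply $H_{h_1,h_2}=(\mathcal{LIM}(h_1)^\top \mid \mathcal{LIM}(h_2)^\top)$ by $(\mathcal{LIM}(h_1)^\top)^{-1}$, which exists by Proposition \ref{prop:invert} since $h_1$ is invertible. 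The product is $(I_n \mid \mathcal{LIM}(h_2 h_1^{-1})^\top)=H_{1,h_2 h_1^{-1}}$, and since left-multiplication by an invertible matrix preserves the row space, this is a parity-check matrix of $\mathcal{C}$, evidently in systematic form.

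The only step requiring real thought is the multiplicativity of $\mathcal{LIM}$; everything else is immediate from the definitions, the permutation action of $G$ on itself, and Proposition \ref{prop:invert}. I do not expect any genuine obstacle, but the homomorphism property is the single fact that makes the whole simplification $H_{h_1,h_2}\leadsto H_{1,h_2 h_1^{-1}}$ work, and it is worth isolating it as a small lemma (or at least a sentence) before concluding.
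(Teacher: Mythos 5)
Your proposal is correct and follows the same route as the paper: part (a) is exactly the paper's permutation argument, and for part (b) the paper simply writes ``straightforward computations show the second part,'' which your multiplicativity lemma $\mathcal{LIM}(bc)=\mathcal{LIM}(b)\mathcal{LIM}(c)$ and the left-multiplication by $(\mathcal{LIM}(h_1)^\top)^{-1}$ make explicit and correct. Isolating the homomorphism property as a lemma is a reasonable improvement over the paper's omission, but it is not a different approach.
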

\begin{proof}
Note that the entries of the vectors $\Psi^{-1}(g_ih_j)$ are the same of those of $h_j$, for any $i$ and $j$, since the map
$x \in G \mapsto xg_i \in G$ is a permutation of $G$. This implies that the $\F_q$-span of the entries of $h_1$ and $h_2$ corresponds to $\mathcal{F}$ (that is the $\fq$-span of the entries of $H$).
Straightforward computations show the second part, which we omit for brevity.
\end{proof}

A way to hide the structure of a left ideal LRPC code is to reveal only its systematic parity-check matrix. 
\section{ROLLO-I using group algebra left ideal LRPC codes}

A Key-Encapsulation scheme KEM = (KeyGen; Encap; Decap) is a triple of probabilistic
algorithms together with a key space $\mathcal{K}$. The key generation algorithm KeyGen generates a
pair of public and secret key $(pk; sk)$. The encapsulation algorithm Encap uses the public key $pk$ to produce an encapsulation $c$ and a key $K \in \mathcal{K}$. Finally Decap, using the secret key $sk$ and an encapsulation $c$, recovers the key $K$ or fails and return error.

We adapt the Key-encapsulation scheme ROLLO I by using this left ideal LRPC codes constructed using group algebras.

\begin{description}

	\item[KeyGen:] 
	\begin{itemize}
		\item Pick randomly an $\F_q$-subspace $\mathcal{F}$ of  $\F_{q^m}$ with dimension $\lambda$ and pick $(x,y) \in \mathcal{F}^n \times \mathcal{F}^n$ such that the $\F_q$-span of the components of $x$ and $y$ is $\mathcal{F}$ and $\Psi(x)$ is invertible (such a choice of $x$ depends on the group algebra, but it is important to remark that a ``general'' element in a group algebra is invertible and that it is easy to check if an element is invertible or not). 
		\item Compute $\Psi(h)=\Psi(y)\Psi(x)^{-1}$.
		\item Define $pk=(h,G)$ and $sk=(x,y)$.
	\end{itemize}
	\item[Encap(pk):] 
	\begin{itemize}
		\item Pick randomly an $\F_q$-subspace $\mathcal{E}$ of  $\F_{q^m}$ with dimension $r$ and pick $(e_1,e_2) \in \mathcal{E}^n \times \mathcal{E}^n$ such that the $\F_q$-span of the components of $e_i$ is $\mathcal{E}$, for $i \in \{1,2\}$
		\item Compute $\Psi(c)=\Psi(e_1)+\Psi(e_2)\Psi(h)$. 
		\item Compute $K=Hash(\mathcal{E})$ and give as output $c$.
	\end{itemize}
	\item[Dec(sk):]
	\begin{itemize}
		\item Compute $\Psi(c)\Psi(x)=\Psi(e_1)\Psi(x)+\Psi(e_2)\Psi(y) $, since $\Psi(h)=\Psi(y)\Psi(x)^{-1}$ and recover $\mathcal{E}$ with the algorithm RSR.
		\item $K=Hash(\mathcal{E})$.
	\end{itemize}
\end{description}

The above KEM scheme relies on the supposed hardness of following problem. 

\begin{problem}
Given an element $h \in \F_{q^m}G$, it is ``hard" to distinguish whether the code $\mathcal{C}$ with the parity-check matrix $H=\begin{pmatrix}
I_n \ \mathcal{LIM}(h)^\top
\end{pmatrix}$ is a random left ideal code or if it is an left ideal LRPC code with parameter $(\lambda,n,2n)$. \\
In other words, it is ``hard" to distinguish if $h$ was sampled uniformly at random or as $h_2{h_1}^{-1}$ where the $\F_q$-span of the components of the vectors $h_1$ and $h_2$ has dimension $\lambda$.  
\end{problem}

The large number of invertible elements in general group algebras and the fact that in general no canonical generators of ideals are known seems to suggest that the problem is likely to be hard.
Note that if $G$ is a cyclic group our construction of left ideal codes coincide with the the notion of quasi-cyclic code of index $2$. In this case, there is a structural attack against this construction (see \cite{hauteville2015new}). This implies that we have to suppose that $G$ is not cyclic and suggests also to avoid the abelian case. If $G$ is not cyclic, left ideal codes are similar to quasi-$G$ codes of index $2$ (see \cite{borello2019algebraic}), even if the transposition prevents them from being properly. It would be interesting to analyse if a similar attack exists also in this case. 

It is possible also to readapt the Cryptosystems ROLLO II and ROLLO III, using this approach.

\vspace{-3mm}

\section*{Acknowledgments}

The last two authors were supported by the project ``VALERE: VAnviteLli pEr la RicErca" of the University of Campania ``Luigi Vanvitelli'' and by the Italian National Group for Algebraic and Geometric Structures and their Applications (GNSAGA - INdAM).

\bibliographystyle{abbrv}
\bibliography{biblio}

\begin{thebibliography}{10}

\bibitem{aragon2019low}
N.~Aragon, P.~Gaborit, A.~Hauteville, O.~Ruatta, and G.~Z{\'e}mor.
\newblock Low rank parity check codes: New decoding algorithms and applications
  to cryptography.
\newblock {\em IEEE Transactions on Information Theory}, 65(12):7697--7717,
  2019.

\bibitem{borello2019algebraic}
M.~Borello and W.~Willems.
\newblock On the algebraic structure of quasi group codes.
\newblock {\em Journal of Algebra and its Applications}, 2022.

\bibitem{chimal2022efficient}
H.~Chimal-Dzul, N.~Gassner, J.~Rosenthal, and R.~Schnyder.
\newblock Efficient description of some classes of codes using group algebras.
\newblock In {\em 25th International Symposium on Mathematical Theory of
  Networks and Systems (MTNS 2022)}, 2022.

\bibitem{el2018efficient}
I.~El~Qachchach, O.~Habachi, J.-P. Cances, and V.~Meghdadi.
\newblock Efficient multi-source network coding using low rank parity check
  code.
\newblock In {\em 2018 IEEE Wireless Communications and Networking Conference
  (WCNC)}, pages 1--6. IEEE, 2018.

\bibitem{gaborit2013low}
P.~Gaborit, G.~Murat, O.~Ruatta, and G.~Z{\'e}mor.
\newblock Low rank parity check codes and their application to cryptography.
\newblock In {\em Proceedings of the Workshop on Coding and Cryptography WCC},
  volume 2013, 2013.

\bibitem{gaborit2013lowrank}
P.~Gaborit, G.~Murat, O.~Ruatta, and G.~Zemor.
\newblock {Low Rank Parity Check codes and their application to cryptography}.
\newblock In L.~Budaghyan, T.~Helleseth, and M.~G. Parker, editors, {\em {The
  International Workshop on Coding and Cryptography (WCC 13)}}, page 13 p.,
  Bergen, Norway, Apr. 2013.

\bibitem{hauteville2015new}
A.~Hauteville and J.-P. Tillich.
\newblock New algorithms for decoding in the rank metric and an attack on the
  lrpc cryptosystem.
\newblock In {\em 2015 IEEE International Symposium on Information Theory
  (ISIT)}, pages 2747--2751. IEEE, 2015.

\bibitem{makhijani2014unit}
N.~Makhijani, R.~Sharma, and J.~Srivastava.
\newblock The unit group of finite group algebra of a generalized dihedral
  group.
\newblock {\em Asian-European Journal of Mathematics}, 7(02):1450034, 2014.

\bibitem{melchor2020rollo}
C.~Melchor, N.~Aragon, M.~Bardet, S.~Bettaieb, L.~Bidoux, O.~Blazy,
  J.~Deneuville, P.~Gaborit, A.~Hauteville, A.~Otmani, et~al.
\newblock {ROLLO} (merger of {R}ank-{O}uroboros, {LAKE} and {LOCKER}).
\newblock {\em Second round submission to the {NIST} post-quantum cryptography
  call}, 2020.

\bibitem{sahai2019unit}
M.~Sahai and S.~F. Ansari.
\newblock Unit groups of group algebras of certain dihedral groups-ii.
\newblock {\em Asian-European Journal of Mathematics}, 12(04):1950066, 2019.

\bibitem{sandling1984units}
R.~Sandling.
\newblock Units in the modular group algebra of a finite abelian p-group.
\newblock {\em Journal of Pure and Applied Algebra}, 33(3):337--346, 1984.

\bibitem{yazbek2017low}
A.~K. Yazbek, I.~El~Qachchach, J.-P. Cances, and V.~Meghdadi.
\newblock Low rank parity check codes and their application in power line
  communications smart grid networks.
\newblock {\em International Journal of Communication Systems}, 30(12):e3256,
  2017.

\end{thebibliography}

\bigskip

\par\noindent Martino Borello\\
Universit\'e Paris 8, Laboratoire de G\'eom\'etrie, Analyse et Applications, LAGA,
Universit\'e Sorbonne Paris Nord, CNRS, UMR 7539, France\\
{{\em martino.borello@univ-paris8.fr}}

\bigskip
\par\noindent Paolo Santonastaso and Ferdinando Zullo\\
Universit\`a degli Studi della Campania ``Luigi Vanvitelli'', Caserta, Italy\\
{{\em \{paolo.santonastaso,ferdinando.zullo\}@unicampania.it}}

\end{document}